\newcommand{\app}{\textnormal{app}}
\title{Sorting from Crowdsourced Comparisons using Expert Verifications}
\date{\today}
\author{
	Ellen Vitercik \\ Stanford \\ \texttt{vitercik@stanford.edu}
  \and
	Manolis Zampetakis \\ Yale \\ \texttt{emmanouil.zampetakis@yale.edu}
 \and
	David Zhang \\ UC Berkeley \\ \texttt{david.z@berkeley.edu}}
\begin{document}

\maketitle

\begin{abstract}
We introduce a novel noisy sorting model motivated by the Just Noticeable Difference (JND) model from experimental psychology. The goal of our model is to capture the low quality of the data that are collected from crowdsourcing environments. Compared to other celebrated models of noisy sorting~\citep[e.g.,][]{feige1994computing, Braverman16:Parallel, Gu23:Noisy}, our model does not rely on precise data-generation assumptions and captures crowdsourced tasks' varying levels of difficulty that can lead to different amounts of noise in the data. To handle this challenging task, we assume that we can verify some of the collected data using expert advice. This verification procedure is costly; hence, we aim to minimize the number of verifications we use. 

We propose a new efficient algorithm called CandidateSort, which we prove uses the optimal number of verifications in the noisy sorting models we consider. We characterize this optimal number of verifications by showing that it is linear in a parameter $k$, which intuitively measures the maximum number of comparisons that are wrong but not inconsistent in the crowdsourcing data.
\end{abstract}

\section{Introduction}

 Sorting from noisy data is a fundamental problem in computer science that arises in many practical applications, from social choice \cite{caplin1991aggregation} to crowdsourced feedback \cite{Capuano16:Fuzzy, Kazai11:Crowdsourcing, Lang11:Mechanical, Shah13:MOOC, Von08:Recaptcha, Wang22:MOOC} to peer grading in online courses \cite{piech2013tuned}. In the context of crowdsourced feedback or peer grading, studies show that using ordinal information is preferable to using cardinal information~\citep[e.g.,][]{Voorhees98:Variations}, i.e., it is preferable to use noisy comparisons as opposed to noisy evaluations. This effect is primarily due to the miscalibration arising from variance in participants' judgment.
One instantiation of this appears in machine translation, where the goal is to rank a list of candidate translations for a given sentence or phrase \cite{Li2020:Crowdsourced}.  Online users fluent in the given languages can provide the large-scale input required for translation systems to function smoothly.
 
Due to its relevance, sorting from noisy comparisons has become an important theoretical problem in computer science with many old and recent results on this topic, covering the complexity of finding a correct sorting from noisy comparisons \cite{feige1994computing, Wang22:Noisy, Gu23:Noisy}, the parallel complexity of this problem \cite{Braverman16:Parallel}, and faster algorithms with more structured noisy feedback \cite{Braverman08:NoResampling}.
 
  The classical formulation of this problem \cite{feige1994computing} assumes that every comparison is independently correct with some probability $p > 1/2$, which is the same across all comparisons. This simple model ignores the fact that, in many cases, some comparisons are easier than others. For example, comparing the first with the last candidate is often easier than comparing two candidates with almost the same ordering. The Mallows model is a more structured model that accounts for this asymmetry~\cite[e.g.,][]{Braverman08:NoResampling}. In this model, the probability of observing an ordering depends on its distance to the ground truth ordering. Although this is a natural model in many instances, it assumes a very precise distributional prior for the input data, which might not align with the unpredictable nature of data we observe in crowdsourcing or peer grading. 

\subsection{Our Contribution} \label{sec:contribution}
We introduce a new model of noisy comparisons with a \textit{verification oracle} and provide efficient algorithms for determining the true ordering of the candidates while using as few verifications as possible. Our model accounts for the discrepancy in comparing candidates with different distances in the true ranking while not making strong assumptions on the input data.

Our model is inspired by the concept of \textit{just-noticeable differences (JND)} in experimental psychology, as introduced by the celebrated psychologist Ernst Heinrich Weber \cite{Weber1831:Pulsu}. The key idea is that comparisons may be easy as the difference in quality between two candidates decreases, but only until a threshold where two candidates have near-indistinguishable quality. Comparisons then become arbitrary: the probability of fallacious comparisons can become even larger than $1/2$.
More formally, in this JND model, we assume all candidates have a latent cardinal value representing their quality. There is a threshold $\delta > 0$ (intuitively small) such that comparisons are correct when the difference in quality exceeds the threshold. Otherwise, the comparisons are arbitrary.  

To handle the existence of arbitrary comparisons, we assume that we have access to an \emph{expert} who can give correct pairwise comparisons regardless of the candidate pair, perhaps due to experience and training. In the rest of the paper, we call the expert-provided comparisons \textit{verifications} since they are guaranteed to be correct. Of course, the expert's input may be much more expensive than the crowdsourced data, and hence, we aim to minimize the amount of feedback we require from the experts. This technical problem is the focal point of our study.

We formalize and examine this problem by developing a framework where we first receive many noisy pairwise comparisons under a noise model subsuming the JND model. We then aim to determine a correct ranking from this input using as few verifications as possible. We now give more technical details on our model and results.

\subsubsection{Our Model} \label{sec:intro:model}

In order to capture the described setting, we introduce a model where a ranking of a set $V$ of $n$ elements is to be determined from two sources of comparisons:
\begin{description}
\item[Crowdsourcing data.] This data set includes the outcome of a set of pairwise comparisons that have been crowdsourced.  To represent these data, we define a directed multi-graph $G = (V, E)$. The direction of every edge $(x_i, x_j)$ in $G$ represents the outcome of comparing the elements $x_i$ and $x_j$. We allow $G$ to be a multi-graph because we may request the same comparison multiple times from different workers in the crowdsourcing platform. Because the data in the graph $G$ are noisy, $G$ may have cycles, although an underlying true ordering of the elements $V$ exists.

\item[Input from experts.]
Apart from the graph $G$, the algorithm can ask for the comparison between $x_i$ and $x_j$ from some expert. The answer from the expert is certainly correct, but we assume that every such query to an expert is costly. We call every such query a \emph{verification}, and one of our goals is to minimize the number of verifications.
\end{description}

The noisy comparisons may be thought of as crowdsourced data or data from cheap but potentially unreliable sources, such as machine learning models.
The verifications may be data provided by experts whose judgment regarding orderings is accurate, but their input is more expensive than that of the crowdsourcers.

We introduce three different models that capture the generation and quality of the crowdsourcing data. We characterize the number of verifications needed in each of these models. Our first model corresponds to experimental psychology's just noticeable difference (JND) model. The other two models are abstractions of the JND model, illustrating that our results are general and are not tailored to the specific details of the JND model.  

\begin{description}
  \item[Just Noticeable Difference Model.] In this model, we assume that every element $x_i$ of $V$ has some cardinal value $\ell_i$. The comparison between two elements $x_i$ and $x_j$ from crowdsourcing in this model has a uniformly random outcome if $|\ell_i - \ell_j| \le \delta$ and it is correct if $|\ell_i - \ell_j| > \delta$. In this model, we can pick a parameter $r$ representing the number of times we independently ask to compare each pair of elements.
  \item[Fixed Ambiguity Model.] In this model, we control the number of inconsistent comparisons that we receive from the crowdsourced data. A comparison between $x_i$ and $x_j$ in the crowdsourced data is inconsistent if the graph $G$ contains both of the directions: $(x_i, x_j)$ and $(x_j, x_i)$. In this model, we assume that every vertex $x_i$ of $G$ has the same number of inconsistencies with smaller elements---i.e., with elements $x_j$ smaller than $x_i$---and the same number of inconsistencies with larger elements. We precisely define this model in Definition \ref{def:ambiguous}.
  \item[Adversarial Model.] In this model, we assume that we start from a graph from the fixed ambiguity model, and we allow an adversary to corrupt $k$ inconsistent edges of $G$ and make them look like consistent edges. As we show in Section \ref{sec:results}, the JND model is a special case of this adversarial model with $O(n \delta 2^{-r})$ expected corruptions.
\end{description}

\subsubsection{Our Results} \label{sec:intro:results}

We propose a novel algorithm, \textsf{CandidateSort}, and we establish its efficiency in the three different models that we defined above:
\begin{enumerate}
    \item \textbf{Fixed Ambiguity Model.} In Theorem \ref{algo-correct}, we show that \textsf{CandidateSort} will efficiently find the correct ordering using \textbf{zero} verifications in this model.
    \item \textbf{Adversarial Model.} In Theorem \ref{thm:general-result}, we prove that \textsf{CandidateSort} requires only $O(k)$ verifications to find the correct ordering in the adversarial model with $k$ corruptions. This result is surprising since the algorithm does not know which of the $O(n^2)$ edges are corrupted.
    \item \textbf{Just Noticeable Difference Model.} In Theorem \ref{expected-verifications-discrete}, we prove that \textsf{CandidateSort} only needs $O(n \delta 2^{-r})$ verifications in the JND model to efficiently compute the correct ordering with high probability.
\end{enumerate}
In Section \ref{sec:lb}, we prove that the results from the adversarial model and the JND model are tight: no algorithm computes the correct ordering and with asymptotically fewer verifications.

Compared to other celebrated models of noisy sorting (see Section \ref{sec:related}), our model does not rely on precise data-generation assumptions and captures crowdsourced tasks' varying levels of difficulty that can lead to different amounts of noise in the data. We therefore believe that our \textsf{CandidateSort} algorithm can be a powerful technique for denoising data from crowdsourcing environments. We also believe that our model can inspire more theoretical results in noisy sorting without precise assumptions about the data generation procedure.

\subsection{Related Work} \label{sec:related}

Our model of noisy comparisons is based on the just noticeable difference (JND) model that is a product of the celebrated Weber's Law in psychology \cite{Warren1922:Elements}. The application of Weber's Law in the context of comparisons provided by human input has led to noisy sorting models that have been extensively used in many experimental psychology settings~\cite[e.g.,][]{David88:PairedComparisons, Thurstone27:Comparative}. 
 
\citet{Ajtai15:JND} and \citet{Acharya18:JND} have also explored the theoretical properties of this model. However, they aim to provide efficient algorithms that return the best possible sorting from the given input without using verifications. Beyond the JND model, there are other noisy sorting models under which the probability of an incorrect comparison between two alternatives depends on some notion of distance between the alternatives. The most popular are the Mallows model \cite{Mallows1957:Non,Braverman08:NoResampling,Busa-FeketeFSZ19:Optimal}, the Plackett-Luce model \cite{Plackett1975:Analysis,Luce12:Individual, MaystreG15:fast}, and the Bradley-Terry-Luce \cite{BradleyT1952:Rank, Hunter04:MM}. Pairwise ranking schemes have also been used in peer review settings to determine top candidates for recognition and accolades \cite{Sanderson23:3b1b}.
There are two main differences in our work from the aforementioned large volume of work on noisy sorting: (1) we introduce the verification oracle, which allows us to find the correct sorting, and our goal is to minimize the number of verifications used, and (2) we extend our results to a more general setting without a precise data-generation assumption and we characterize the number of verifications that are needed in these more general setting as well.

Noisy sorting has also been explored in the theoretical computer science literature \cite{feige1994computing}. There are many papers studying algorithmic aspects such as the optimal number of comparisons \cite{feige1994computing, Dereniowski21:Noisy, Wang22:Noisy, Gu23:Noisy}, parallel computation \cite{Braverman16:Parallel}, avoiding the use of resampling \cite{Braverman08:NoResampling, Geissmann19:NoResampling}, as well as estimation from mixture models \cite{LiuM18:Efficiently} and adversarial models \cite{LiuM23:Robust}.  Our techniques are also algorithmic, but our work initiates the study of recovering the true sorting perfectly while minimizing the number of oracle queries. At the same time, our results hold under much more flexible assumptions on the data-generating procedure.

\section{Model}\label{sec:model}
In our model, there is a set $S$ of $n$ elements, $x_{1}, \ldots, x_{n}$, each with a distinct cardinal value $\ell_i \in \mathbb{R}$.
We may think of $(x_{i}, \ell_{i})$ as key-value pairs.
If $\sigma: [n] \to [n]$ is the permutation that has $\ell_{\sigma(1)} < \ell_{\sigma(2)} < \ldots < \ell_{\sigma(n)}$, then we seek to order the elements as $(x_{\sigma(1)}, \ldots, x_{\sigma(n)}).$ We refer to this ordering as the ``ground-truth ordering.'' If $\sigma(i) = j$, we say that $x_{j}$ has order $i$.

To sort, we use two types of pairwise comparisons: noisy comparisons, which are widely available, and verifications, which are costly and intended to be used as little as possible. The algorithm's input is a set of noisy comparisons and the algorithm's performance is determined by: (1) its runtime, (2) its accuracy in reconstructing $\sigma$, and (3) the number of verifications it uses.

We represent the algorithm's input with a tournament graph $G = (V,E)$, allowing for bidirectional edges (also known as two-cycles). This graph may be constructed from a complete multi-graph by merging edges between every vertex pair while preserving their direction.
There are $n$ nodes $V = \{x_1, \dots, x_n\}$ and every edge represents a (potentially noisy) comparison. Specifically, a directed edge $(x_i, x_j)$ pointing from $x_i$ to $x_j$ indicates that $\ell_i > \ell_j$. We may have multiple comparisons per pair, and these comparisons may be conflicting or non-conflicting. Between any two nodes, we assume there is at most one edge in a given direction, so every node has an in- and out-degree of at most $n-1$. Moreover, there is at least one edge between every pair of nodes. If two nodes only have a unidirectional edge between them (i.e., there is no two-cycle between the nodes), we call that edge a \emph{simple} edge. We use the following terminology based on simple edges.

\begin{figure}[t]
	\begin{subfigure}[b]{0.48 \textwidth}
		\includegraphics[scale=1]{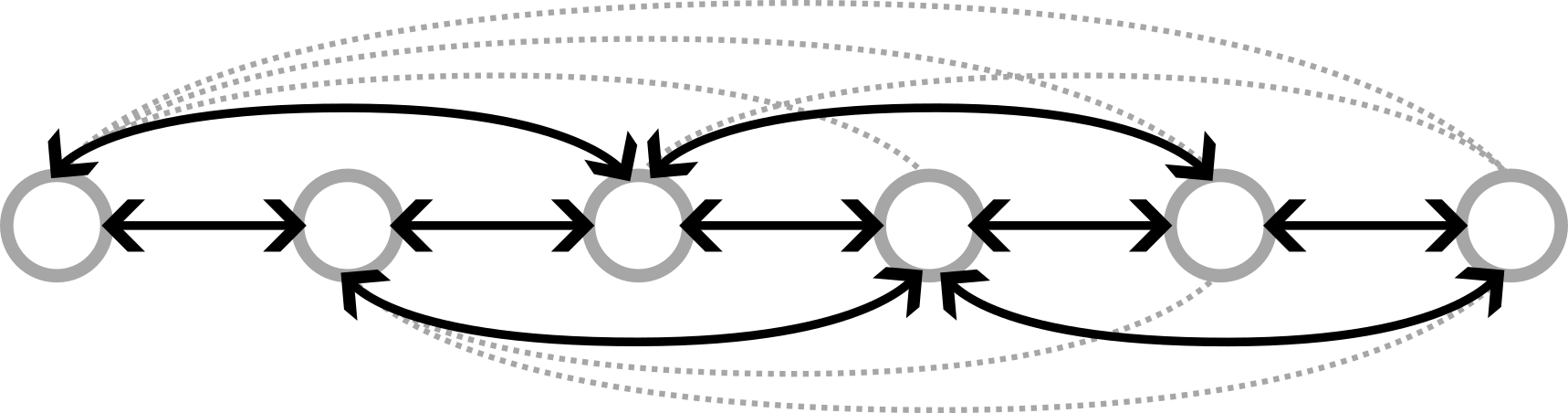}
		\centering
		\caption{Grey dotted lines are simple edges. Black solid lines are two-cycles.}
	\end{subfigure}
	\hfill
	\begin{subfigure}[b]{0.48\textwidth}
		\includegraphics[scale=1]{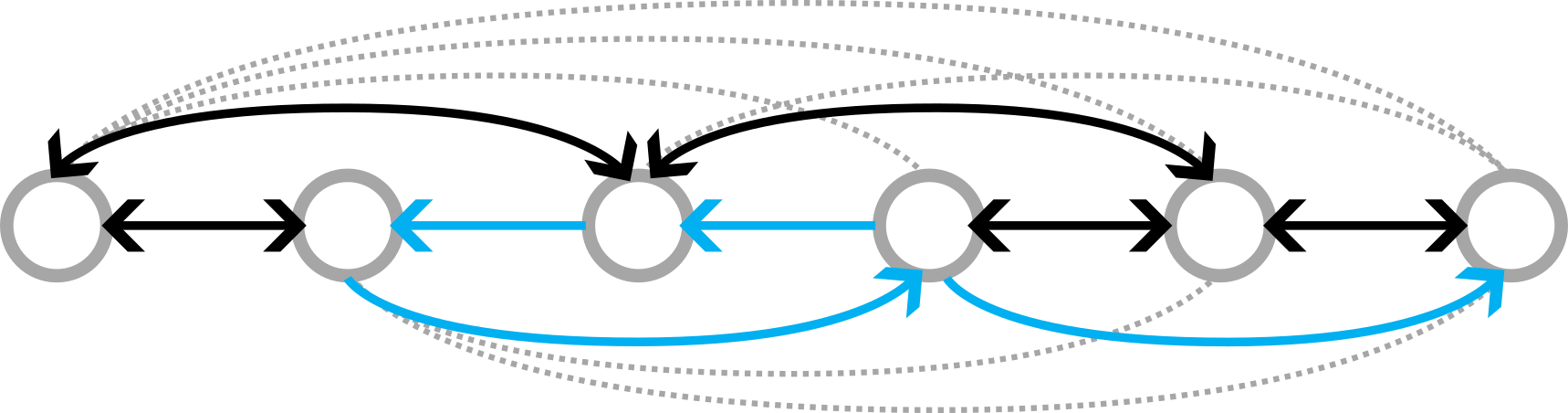}
		\centering
		\caption{Blue edges indicate ambiguous simple edges.\newline}
	\end{subfigure}
	\caption{Examples of edges in an input tournament graph.}
\end{figure}

\begin{definition}
The \emph{simple in-degree} of an element $x_i$, denoted $d_{in}(x_i)$, is the number of simple edges $(x_j, x_i)$. Similarly, the \emph{simple out-degree} $d_{out}(x_i)$ is the number of simple edges $(x_i, x_j)$.
\end{definition}

Our analysis builds off of a particularly simple type of corrupted tournament graph where a certain number of simple edges have been converted to two-cycles, defined as follows.
\begin{definition}
\label{def:ambiguous}
A tournament graph is \emph{$\vec{\nu}$-ambiguous} for $\vec{\nu} = (\nu_{+}, \nu_{-}) \in \mathbb{N}^{2}$ if:
    \begin{enumerate}
    \item All simple edges point in the correct direction, and
    \item Any element $x_i$ has exactly $\min\{\sigma(i)-1, \nu_{+}\}$ incident two-cycles with larger-indexed neighbors and exactly $\min\{n-\sigma(i), \nu_{-}\}$ incident two-cycles with smaller-indexed neighbors. That is, \begin{align}&|\{j \in V : \ell_j < \ell_i \land (x_j, x_i) \in E\}| = \min\{\sigma(i)-1, \nu_{+}\} \text{ and }\label{eq:in}\\
    &|\{j \in V : \ell_j > \ell_i \land (x_i, x_j) \in E\}| = \min\{n-\sigma(i), \nu_{-}\}\label{eq:out}.\end{align}
    \end{enumerate}
\end{definition}
This is an ``easy'' corrupted graph because, as we prove, we can uncover the ground-truth ordering without using any verifications. In Equation~\eqref{eq:in}, we take the minimum of $\sigma(i)-1$ and $\nu_+$ because there are at most $\sigma(i)-1$ elements $j \in V$ with $\ell_j < \ell_i$, and similarly for Equation~\eqref{eq:out}.

\paragraph{Adversarial model.} Our results apply to adversarially corrupted tournament graphs. If all edges were replaced with two-cycles, we would have no choice but to sort the elements using verifications. Therefore, we constrain the adversary by requiring them to start with any $\vec{\nu}$-ambiguous tournament graph. They may then change any number of two-cycles to simple edges (pointing in the correct or incorrect direction). The following type of edge will be useful for our analysis.
\begin{definition}
    A \emph{ambiguous simple} edge is a simple edge that the adversary switched from a two-cycle. More generally, a tournament graph has \emph{$k$ ambiguous simple edges} if it is constructed by taking a $\vec{\nu}$-ambiguous graph and changing exactly $k$ two-cycles to simple edges of any direction.
\end{definition}

We bound the number of verifications our algorithm requires using the number of ambiguous edges. Our algorithm does not need to know which edges are ambiguous.

\paragraph{Just noticeable difference model.} A natural instantiation of our model 
is the well-studied ``just-noticeable-difference'' (JND) model, which is parameterized by a value $\delta \geq 0$. In this model, we send pairwise-comparison queries to crowdsourcers. Given a query $(x_i,x_j)$, if $|\ell_i - \ell_j| 
 > \delta$, then a crowdsourcer will return the correct comparison with probability 1. Otherwise, a crowdsourcer is equally likely to return a correct or incorrect comparison. Thus, $\delta$ represents a threshold below which the difference between elements is not noticeable. We represent crowdsourcers' responses with edges in the tournament graph, so if $|\ell_i - \ell_j| < \delta$, there may be a two-cycle between $x_i$ and $x_j.$ 

\section{Results} \label{sec:results}
We begin by analyzing the fixed ambiguity model and the adversarial model, where all graphs are assumed to be $\vec{\nu}$-ambiguous, or constructed from a $\vec{\nu}$-ambiguous graph by changing two-cycles to simple edges.
In Section~\ref{sec:discrete_alg}, we describe the \textsf{CandidateSort} algorithm, prove correctness, and analyze runtime on the number of ambiguous simple edges of the input.
We proceed in Section~\ref{sec:jnd} to an average-case analysis of \textsf{CandidateSort} in the special-case setting of the JND model.
Then, we construct in Section~\ref{sec:lb} an input demonstrating optimality of \textsf{CandidateSort} on worst-case inputs.
\subsection{\textsf{CandidateSort} Algorithm}\label{sec:discrete_alg}

\begin{algorithm}[h]
    \caption{\textsf{CandidateSort}}\label{alg:candidatesort}
    \begin{algorithmic}
        \Require $V = \{x_{i} : i \in [n]\}, \vec{\nu} = (\nu_{+}, \nu_{-})$
        \State Define the set of the set of elements to be sorted: $U \gets V$
        \State Define the ascending process' candidate heap: $C_{A} \gets \emptyset$
        \State Define the descending process' candidate heap: $C_{D} \gets \emptyset$
        \State Define a null vector $\tau$ of length $n$ which will hold the final sorting
        \State Define the ascending (resp., descending) process' starting round: $r_{A} \gets 1$ (resp., $r_{D} \gets 1$)
        \State Define the \textbf{A}scending process as the starting process: $\rho \gets A$ \Comment{$\rho \in \{A, D\}$}
        \While{$U \neq \emptyset$}
            \State Collect the unsorted elements satisfying the degree threshold: \[S_{\rho} \gets \begin{cases}\left\{x \in U \setminus C_A : d_{in}(x) \geq \min\{n - r_A - \nu_{+}, 0\}\right\} &\text{if }\rho=A\\
            \left\{x \in U \setminus C_D : d_{out}(x) \geq \min\{r_D - 1 - \nu_{-}, 0\}\right\} &\text{if }\rho=D\end{cases}\]
            \If{$|S_{\rho}| + |C_{\rho}| = 1$} \Comment{Only one candidate element for this round}
                \State \textbf{do} \textsf{subroutine} \Comment{\textsf{subroutine} (Algorithm  \ref{alg:subroutine}) updates $C_\rho$, $\tau$, $U$, and $r_{\rho}$}
            \Else
                \State $\rho \gets \rho^{op}$ \Comment{Swap to ascending/descending process}
                \State Collect the unsorted elements satisfying the degree threshold: \[S_{\rho} \gets \begin{cases}\left\{x \in U \setminus C_A : d_{in}(x) \geq \min\{n - r_A - \nu_{+}, 0\}\right\} &\text{if }\rho=A\\
            \left\{x \in U \setminus C_D : d_{out}(x) \geq \min\{r_D - 1 - \nu_{-}, 0\}\right\} &\text{if }\rho=D\end{cases}\]
                \If{$|S_{\rho}| + |C_{\rho}| = 1$}
                    \State \textbf{do} \textsf{subroutine}
                \Else 
                    \State $\rho \gets \rho^{op}$
                    \State \textbf{do} \textsf{subroutine}
                    \State $\rho \gets \rho^{op}$
                \EndIf
            \EndIf
        \EndWhile
        \Return $\tau$
    \end{algorithmic}
\end{algorithm}
\begin{algorithm}[h] 
    \caption{\textsf{subroutine}}\label{alg:subroutine}
    \begin{algorithmic}
        \State Using verifications, insert the $S_{\rho}$ into the candidate heap $C_{\rho}$
        \State Define $\tau[r_{\rho}]_{\rho}$ to be an element popped from the heap $C_{\rho}$ 
        \State $U \gets U \setminus \{\tau[r_{\rho}]_{\rho}\}$ \Comment{Update unsorted elements}
        \State $r_{\rho} \gets r_{\rho} + 1$ \Comment{Update round}
    \end{algorithmic}
\end{algorithm}

The algorithm we propose, \textsf{CandidateSort} (Algorithm~\ref{alg:candidatesort}), is based on the following observations.
In the tournament graph for the true ordering, elements have unique simple in- and out-degrees of $n - i$ and $i - 1$, respectively, for the $i$-th order.
In any input graph, at most $\nu_{-} + \nu_{+}$ incident edges of any vertex are changed from the ground-truth tournament graph.
Hence, the simple in- and out-degrees for the $i$-th order are at least $\max\{n - i - \nu_{+}, 0\}$ and $\max\{i - 1 - \nu_{-}, 0\}$, respectively.

Our algorithm runs two simultaneous processes.
The first process, known as the \emph{ascending process}, determines elements in ascending order, and its steps are as follows.
We refer to the subroutine determining the element of order $k$ as \emph{round $k$}.
The ascending process starts at round $1$ and finishes at round $n$.
Across all rounds, it maintains a binary min-heap that we refer to as the \emph{candidate heap} $C_{A}$.
Comparisons in the heap are done with verifications.
On round $k$, we insert into the heap all elements with simple in-degree at least $\max\{n - k - \nu_{+}, 0\}$.
The minimum element is then popped from the heap, labeled as of order $k$, and discarded.
In the next round $k + 1$, we insert into the candidate heap all elements of simple in-degree $\max\{n - (k + 1) - \nu_{+}, 0\}$ not yet labeled or in the candidate heap, then pop the minimum and mark it as the element of order $k$.

The second process, referred to as the \emph{descending process}, involves essentially the exact same steps, except elements are determined in descending order.
Round $k$ determines the order of element $k$. 
The descending process starts at round $n$ and finishes at round $1$.
The min-heap is now a max-heap $C_{D}$, and at round $k$, we insert into the heap all elements of simple out-degree at least $\max\{k - 1 - \nu_{-}, 0\}$.
We pop the maximum element from this heap as the element of order $k$. 

The processes maintain separate heaps but share the same set of unsorted elements. 
Hence, if one process sorted and discarded element $x$, then the other discards $x$ by never adding it to its heap, or discarding $x$ from its heap if already inserted.
The two processes work in parallel as follows.
If there is a process whose first round requires no verifications, meaning it only adds one element to its heap, we start with that process.
Otherwise, we first start with round $1$ of the ascending process.
Usually, once a process has finished a round, the other process will proceed with its next round. 
As an example, we may start with round $1$ of the ascending process, continue to round $n$ of the descending process, then conduct round $2$ of the ascending process, etc.

However, there are caveats.
If the processes have altogether sorted all $n$ elements, colloquially called \emph{meeting in the middle}, then the algorithm terminates. This occurs when the ascending and descending processes finish rounds $k$ and $k + 1$, respectively, for some $k \in [n - 1]$.
Moreover, if a process had only one element in its heap during a given round, and hence used no verifications, then the same process will move to its next round, skipping a turn for the other process.
The process may continue having consecutive rounds until it adds multiple elements to its heap and therefore must use verifications. Then, it stops and allows the other process to proceed.

We now prove that \textsf{CandidateSort} returns the correct sorting. We will use the invariant that
$|C_{A}|, |C_{D}| \in \{0, 1, \ldots, \nu_{-} + \nu_{+}\}$ at any point in time.

\begin{theorem}[Correctness of \textsf{CandidateSort}]
\label{algo-correct}
\textsf{CandidateSort} sorts elements for graphs that are $\vec{\nu}$-ambiguous or constructed from $\vec{\nu}$-ambiguous graphs by changing two-cycles to simple edges.
\end{theorem}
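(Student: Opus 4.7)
The plan is to establish correctness by strong induction on the total number of rounds completed across both processes, showing that each round correctly labels one element with its ground-truth order.

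The first step is to prove a degree lower bound: in any graph that is $\vec{\nu}$-ambiguous or constructed from such a graph by turning two-cycles into simple edges, the element whose order is $k$ satisfies $d_{in}(x) \geq \max\{n-k-\nu_{+}, 0\}$ and $d_{out}(x) \geq \max\{k-1-\nu_{-}, 0\}$. For a $\vec{\nu}$-ambiguous graph this follows from Definition~\ref{def:ambiguous} by counting the non-two-cycle edges incident to larger (resp.\ smaller) neighbors, and using that every simple edge in such a graph is correctly directed. Converting a two-cycle $\{(x_i,x_j),(x_j,x_i)\}$ into a single simple edge only increases one simple degree of one endpoint and decreases none, so the bound is preserved under adversarial modification.

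Next, I would maintain two invariants for the ascending process just before its round-$r_A$ pop: \textbf{(i)} the element $x^{\star}$ of order $r_A$ is in $C_A$, and \textbf{(ii)} every element in $C_A$ has order at least $r_A$. For~(i), the inductive hypothesis says the ascending process has already removed orders $1,\dots,r_A - 1$ from $U$, and the descending process only removes large-order elements, so $x^{\star}\in U$; the degree bound then guarantees that $d_{in}(x^{\star})$ meets the inclusion threshold, so $x^{\star}$ is added to $S_A$ if it is not already in $C_A$. For~(ii), every element of order below $r_A$ has either been popped by the ascending process or, if processed by the descending process, removed from $C_A$ by the algorithm's synchronization rule, so no such element remains in the heap. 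Combined with the fact that heap comparisons use verifications and hence are correct, popping the minimum of $C_A$ returns exactly $x^{\star}$. The descending process is handled symmetrically with analogous invariants on $C_D$ and $d_{out}$.

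Finally, I would verify termination and inter-process coordination: each call to \textsf{subroutine} deletes exactly one element from $U$, sorted elements are purged from both heaps so they cannot be popped twice, and the while-loop therefore terminates after $n$ subroutine calls, at the moment the two processes have jointly labeled all elements. The main obstacle will be invariant~(ii): I need to argue carefully that small-order elements never linger in $C_A$ -- in particular, checking the consecutive-same-process case triggered when $|S_\rho|+|C_\rho|=1$ so that no verifications are used -- and confirm that the cross-heap synchronization described in the algorithm's prose truly maintains both invariants throughout this alternating schedule.
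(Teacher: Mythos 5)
Your proposal matches the paper's proof in structure: both argue by induction on rounds that the element of the current order must be in the candidate heap (via the degree lower bound, which persists under the adversary's two-cycle-to-simple-edge changes since such a change can only increase simple degrees) and must be the heap's extremum because all lower-order elements have already been removed, then separately check that the ascending and descending processes cannot interfere because they work inward from opposite ends and terminate upon meeting. The paper's version is terser — it does not isolate the degree lower bound as a standalone lemma or phrase the heap argument as two explicit invariants — but the underlying reasoning and decomposition (single-process correctness, then inter-process compatibility) are the same as yours.
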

\begin{proof}
We first prove correctness for a single ascending/descending process, before showing the ascending and descending processes are compatible in \textsf{CandidateSort}.

We argue the correctness of the ascending process and correctness of the descending process follows by symmetrical argument. We induct on the number of rounds. For $r_A = 1$, the simple in-degree lower bound ($d_{in}(x) \geq \min\{n - r_A - \nu_{+}, 0\}$) implies the element of order $1$ will be inserted into the candidate heap. Then, popping the minimum from the heap correctly determines the element of order $1$. If all elements of order $\leq k$ have been correctly determined in rounds $\leq k$, then in round $k + 1$, the element of order $k + 1$ has not been discarded. Moreover, the simple in-degree lower bound implies the element of order $k + 1$ will have been inserted into the heap by round $k + 1$. It follows that the element of order $k + 1$ must be the minimum in the heap at round $k + 1$, hence popping the minimum from the heap correctly determines the desired element.

We now prove compatibility of the ascending and descending processes. First, rounds for each process will never overlap. In other words, supposing $i < j$, if the ascending process determines the element of order $j$ (which must occur when $r_{A} = j$), then the descending process will never determine the element of order $i$ (which must occur when $r_{D} = n - i + 1$), and vice versa. This is because the processes determine elements from ``opposite ends'' and ``work inwards'' monotonically. Therefore, if the two processes ever ``cross paths,'' all elements will have been sorted.

The final concern is if the candidate heaps overlap at any round. This is a non-issue, since for a given process $P$, the other process $P'$ only removes elements that are not the one to be selected from the heap of $P$, because $P'$ currently seeks elements of different order, as argued previously.
\end{proof}

For our analysis, we relate the verification complexity to a quantity we now define.
\begin{definition}
An element $x_{i}$ of a graph $G$ makes $\app(x_{i})_{A} \in \mathbb{N}$ \emph{appearances} if it is in the ascending heap for $\app(x_{i})_{A}$ rounds, and $\app(x_{i})_{D}$ is defined likewise for descending. Let $\app(x_{i}) := \app(x_{i})_{A} + \app(x_{i})_{D}$ and let $\app(G) := \sum_{i = 1}^{n} \app(x_{i})$ be the number of total appearances for $G$.
\end{definition}

We now analyze the ``easiest'' case where the input is a $\vec{\nu}$-ambiguous tournament graph.

\begin{lemma}
\label{min-appearances-verifications}
Let $\nu_{-} + \nu_{+} < n$. 
In a $\vec{\nu}$-ambiguous tournament graph,
the number of total appearances is $n$ and the number of verifications is $0$.
\end{lemma}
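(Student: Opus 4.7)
My plan is to reduce the lemma to a short structural computation of simple degrees in a $\vec{\nu}$-ambiguous graph, together with an inductive description of the algorithm's state.

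First, I would compute the simple degrees directly. Since every simple edge in a $\vec{\nu}$-ambiguous graph points in the correct direction, the element of order $i$ has
\[ d_{in}(\text{order } i) = \max\{n - i - \nu_{-},\, 0\}, \qquad d_{out}(\text{order } i) = \max\{i - 1 - \nu_{+},\, 0\}, \]
obtained by subtracting the two-cycle counts from \eqref{eq:in} and \eqref{eq:out} from the number of larger- (respectively smaller-) valued neighbors. Crucially, on the range where they are strictly positive, $d_{in}$ is strictly decreasing in $i$ and $d_{out}$ is strictly increasing.

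Next, I would maintain the following invariant by induction on the number of completed iterations of the while loop: at the start of each iteration, both $C_A$ and $C_D$ are empty, and $U$ consists of exactly the elements whose orders lie in $\{a + 1, \ldots, n - d\}$, where $a$ and $d$ count the completed ascending and descending rounds respectively. Under this invariant, strict monotonicity of the degree formulas implies that whenever the ascending threshold is strictly positive, the only unsorted element meeting it is the element of order $a + 1$, and symmetrically for descending. The subroutine then inserts that single element into the empty heap and pops it immediately, incrementing $a$ or $d$ by one and restoring the invariant.

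The main obstacle is the boundary regime where a threshold collapses to $0$, since then every unsorted vertex trivially meets it and $|S_\rho|$ could exceed $1$. This is precisely where the hypothesis $\nu_{-} + \nu_{+} < n$ is essential. The ascending threshold vanishes only when $a \ge n - \nu_{-} - 1$, and the descending threshold only when $d \ge n - \nu_{+} - 1$; adding these and using $\nu_{-} + \nu_{+} \le n - 1$ gives
\[ a + d \;\ge\; 2n - \nu_{-} - \nu_{+} - 2 \;\ge\; n - 1, \]
i.e., at most one element is still unsorted. In that degenerate case both $|S_A|$ and $|S_D|$ are trivially equal to $1$. Whenever $|U| \ge 2$, at least one threshold is strictly positive, so $|S_\rho| = 1$ for the corresponding process, and the algorithm's swap logic lands on it and performs a verification-free subroutine call.

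To finish, I would observe that the procedure terminates after exactly $n$ combined rounds: each round inserts one element into an empty heap and pops it within the same iteration, so every element makes exactly one appearance and $\app(G) = n$; moreover, every heap operation occurs on a one-element heap and therefore needs no pairwise comparison, so zero verifications are used.
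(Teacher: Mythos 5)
Your proof is correct and takes essentially the same approach as the paper's: compute the exact simple degrees from Definition~\ref{def:ambiguous}, show each round has a unique candidate whenever the degree threshold is strictly positive, and use $\nu_{-}+\nu_{+}<n$ to dispose of the boundary where a threshold hits zero (the paper reaches the same conclusion by tracking the ascending process through its first $n-\nu_{+}$ rounds and then letting the descending process finish the remaining $\nu_{+}$). One caveat worth flagging: your degree formulas place $\nu_{-}$ in $d_{in}$ and $\nu_{+}$ in $d_{out}$, which is what Equations~\eqref{eq:in}--\eqref{eq:out} literally give, whereas Algorithm~\ref{alg:candidatesort} and the paper's own proof use the opposite labeling; this is an internal inconsistency in the paper rather than an error on your part, but you should fix one convention so that your thresholds match your degree formulas --- the lemma's conclusion is unaffected since it is symmetric under $\nu_{+}\leftrightarrow\nu_{-}$.
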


\begin{proof}
Let $x_i$ be an element of order $k$. By Definition~\ref{def:ambiguous}, exactly $\min\{k-1, \nu_+\}$ vertices $x_{j} \in V$ with $\ell_j < \ell_i$ have a two-cycle between $x_{i}$ and $x_{j}$. The other $\max\{0, k-1-\nu_+\}$ satisfying $\ell_j < \ell_i$, have $(x_i, x_j) \in E$ is simple. Symmetrically, exactly $\min\{n-k, \nu_-\}$ vertices $x_{j} \in V$ with $\ell_j > \ell_i$ have two-cycles with $x_{i}$, and the other $\max\{0, n-k-\nu_-\}$ have a simple edge $(x_j, x_i) \in E.$ 
Therefore, $d_{in}(x) = \max\{n - \nu_{+} - k, 0\}$, implying only ever one vertex is in the ascending heap at round $i$ for $i \in [n - \nu_{+}]$.
Symmetrically, $d_{out}(x) = \max\{k - 1 - \nu_{-}, 0\}$, hence only one vertex is in the descending heap at round $i$ for $i \in \{\nu_{-} + 1, \nu_{-} + 2, \ldots, n\}$.
We start with the ascending process. The first $n - \nu_{+}$ elements in ascending order require no verifications since each round, we insert the sole element into the (empty) heap, then proceed to pop it as the minimum.
Therefore, these elements make one appearance in the heap. We now proceed by cases on the value of $\vec{\nu}$.

If $\nu_{+} > 1$, then for all elements $x$ with order $k > n-\nu_+$, we have that $d_{in}(x) = 0$. Therefore, at round $n - \nu_{+} + 1$, the ascending process must add $\nu_{+} > 1$ elements to the heap. Verifications are needed, so the descending process starts. By a symmetrical argument, the descending process uses 0 verifications for the remaining $\nu_{+} < n - \nu_{-}$ elements, which are its first $\nu_+$ elements.

If $\nu_+ = 1$, then round $n - \nu_{+} + 1 = n$ is the last round of the ascending process, so there is only one element remaining, which will be determined as the element of order $n$ with no verifications.

Thus, no verifications were used and all elements made one appearance, so that $\app(G) = n$.
\end{proof}

\begin{lemma}
\label{deg=>verif}
For an element $x$ of order $k_{1}$, let $d_{in}(x) = n - k_{1} + k_{2}$ and $d_{out}(x) = k_{1} - 1 + k_{3}$. Then, $\app(x) \leq 2  k_{2} + k_{3}$.
\end{lemma}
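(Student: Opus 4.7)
The plan is to separately bound the number of rounds $x$ spends in the ascending and descending heaps by computing the first round at which $x$'s degree meets the threshold and the round at which $x$ is popped or discarded, then combine the two bounds.

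For the ascending heap, at round $r_A$ the algorithm inserts any $y \in U \setminus C_A$ with $d_{in}(y) \ge \max\{n - r_A - \nu_+, 0\}$. Substituting $d_{in}(x) = n - k_1 + k_2$, the threshold first holds when $r_A = \max\{1, k_1 - k_2 - \nu_+\}$, and $x$ is popped no later than ascending round $k_1$. This yields $\app(x)_A \le k_2 + \nu_+ + 1$. The analogous analysis for the descending heap, using $d_{out}(x) = k_1 - 1 + k_3$ together with the out-degree threshold and the fact that $x$ leaves the heap no later than descending round $n - k_1 + 1$, gives $\app(x)_D \le k_3 + \nu_- + 1$.

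To sharpen the naive sum $(k_2 + \nu_+ + 1) + (k_3 + \nu_- + 1)$ into the target $2 k_2 + k_3$, I would exploit the parallel scheduling of the two processes and the invariant $|C_A|, |C_D| \le \nu_+ + \nu_-$ from Theorem~\ref{algo-correct}. Once the process that ultimately sorts $x$ pops it, the other process discards $x$ from its heap on its next subroutine, so the two ``base'' $+1$ contributions cannot both be charged. In addition, the heap-size invariant controls how many extra elements have accumulated in either heap while $x$ is resident, which is what lets the $\nu_\pm$ terms be absorbed into the $k_2, k_3$ terms rather than appearing as free additive constants.

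The main obstacle is the asymmetry of the stated bound: a purely symmetric analysis of the two heaps yields a symmetric bound, so the doubled $k_2$ must come from an algorithmic asymmetry, most plausibly the algorithm's preference for starting with the ascending process unless the descending side already has a verification-free round. I expect the argument to split on which process ultimately pops $x$, and in each case carefully count the descending subroutines executed while $x$ lingers in $C_A$ (and vice versa). The delicate step is showing that this accounting forces the $\nu_+$ contribution to be absorbed into an extra factor of $k_2$, while $k_3$ contributes only once, yielding the $2 k_2 + k_3$ bound.
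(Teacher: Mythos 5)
Your core bookkeeping---compute the round at which $x$ first satisfies the degree threshold and the round at which it is popped, then count the rounds in between---is exactly what the paper does. Under a literal reading of the hypothesis $d_{in}(x) = n - k_1 + k_2$, your bound $\app(x)_A \le k_2 + \nu_+ + 1$ is the correct consequence of the threshold $\max\{n - r_A - \nu_+, 0\}$. The paper's proof instead arrives at $\app(x)_A \le 1 + k_2$ with no $\nu_+$ term; this only follows if one reads $k_2$ as the excess of $d_{in}(x)$ over the $\vec{\nu}$-ambiguous baseline $\max\{n - k_1 - \nu_+, 0\}$ (so that $k_2 = 0$ in the uncorrupted case), and that is evidently the intended reading since it makes Lemma~\ref{min-appearances-verifications} and the incremental use in Proposition~\ref{monotonicity-appearances} line up. With that interpretation the paper simply sums the two one-sided bounds to get $\app(x) \le 2 + k_2 + k_3$, which is in fact the bound the paper's proof derives; the ``$2\,k_2 + k_3$'' in the statement is almost certainly a typo for $2 + k_2 + k_3$.

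Your proposal has a genuine gap in the second half. Having reached the naive sum $k_2 + k_3 + \nu_+ + \nu_- + 2$, the two sharpening ideas you float do not close it: the observation that the other process discards $x$ once it is popped saves at most one appearance, not $\nu_+ + \nu_-$ of them; and the invariant $|C_A|, |C_D| \le \nu_+ + \nu_-$ bounds the heap \emph{width} (which is what controls the per-round verification cost $\log c_i$ in Theorem~\ref{thm:general-result}), not how long a given element lingers in a heap, so it does not ``absorb'' the $\nu_\pm$ terms. The final paragraph, which hunts for an algorithmic asymmetry to explain the $2k_2$ coefficient, is chasing the typo: the ascending and descending processes are symmetric and there is no such asymmetry to find. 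The fix is not a cleverer scheduling argument but a reinterpretation of $k_2, k_3$ as offsets from the $\vec{\nu}$-ambiguous baseline degrees; then your own entry-round calculation already gives $\app(x)_A \le 1 + k_2$ and $\app(x)_D \le 1 + k_3$ and you are done.
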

\begin{proof}
In the ascending process, $x$ enters the candidate heap when the simple in-degree lower bound $\min\{n - r_A - \nu_{+}, 0\}$ is $n - k_{1} + k_{2}$,
remaining in the heap until it is the minimum, which is when the threshold is at least $n - k_{1}$ by the induction in Theorem \ref{algo-correct}. Hence, $\app(x)_{A} \leq 1 + k_{2}$. Symmetrically, $\app(x)_{D} \leq 1 + k_{3}$. When the processes are working in tandem, $\app(x)$ can only decrease, since one processes will order $x$ before the other. Hence, $\app(x) \leq 2 + k_{2} + k_{3}$.
\end{proof}

The next result shows that the number of ambiguous simple edges implies an upper bound on $\app(G)$.

\begin{prop}
\label{monotonicity-appearances}
Let $G$ be a graph with at least $k$ two-cycles, and $H$ be constructed from $G$ by changing $k$ two-cycles to simple edges. Then, $\app(G) \leq \app(H) \leq \app(G) + 2k$, implying $\app(G)$ is non-decreasing as two-cycles in $G$ are changed to simple edges.
\end{prop}
\begin{proof}
Suppose $G'$ is a copy of $G$ except there is an edge $e = (u, v)$ which is simple in $G'$, but a two-cycle in $G$. It follows that $d_{in}(v)$ (resp. $d_{out}(u)$) will be 1 larger in $G'$ than $G$. Therefore, $\app(v)$ (resp. $\app(u)$) in the candidate heap of the ascending process (resp. descending process) in $G'$ is one more or the same as in $G$ by Lemma \ref{deg=>verif}. This means that $\app(G) \leq \app(G') \leq \app(G) + 2$. We repeat this process $k$ times to reach $H$ from $G$, implying $\app(G) \leq \app(H) \leq \app(G) + 2k$.
\end{proof}

We now give an upper bound the number of verifications for an input with exactly $k$ ambiguous simple edges.

\begin{theorem}[Upper bound for a fixed number of ambiguous simple edges]
\label{thm:general-result}
    Suppose $G$ is an input tournament graph with $k$ ambiguous simple edges. Then, the number of verifications required to obtain a correct sort with \textsf{CandidateSort} is $O(k)$.
\end{theorem}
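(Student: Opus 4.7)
The plan is to bound the number of verifications via the excess number of appearances $\app(G) - n$, chaining together the two structural results already established. By definition, $G$ with $k$ ambiguous simple edges arises from some $\vec{\nu}$-ambiguous graph $G_0$ by converting exactly $k$ two-cycles into simple edges. Lemma~\ref{min-appearances-verifications} gives $\app(G_0) = n$ (with zero verifications), and iterating Proposition~\ref{monotonicity-appearances} across the $k$ corruptions yields $\app(G) \leq \app(G_0) + 2k = n + 2k$.

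The core remaining step is to show that \textsf{CandidateSort} performs $O(\app(G) - n)$ verifications on any input. The observation is that every verification is a comparison inside a heap insertion or pop in the \textsf{subroutine}, and no such comparison is required when $|C_\rho| + |S_\rho| \leq 1$: the lone element is placed into an empty heap and immediately popped, as in the baseline argument of Lemma~\ref{min-appearances-verifications}. Consequently, any round that incurs verifications must have $|C_\rho| + |S_\rho| \geq 2$, and hence carries an element that was already in the heap — an appearance in excess of the one-per-element baseline. The per-round cost is at most $O((|S_\rho|+1)\log(|C_\rho|+|S_\rho|))$, and since $\log m \leq m-1$ for $m \geq 2$, this is linear in the excess appearances generated at that round. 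Summing over all rounds of both processes yields total verifications at most $O(\app(G) - n) \leq O(2k) = O(k)$.

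The main obstacle will be making the charging argument precise. In particular, we must correctly amortize the $O(\log(|C_\rho|+|S_\rho|))$ comparisons of each heap operation against the $|C_\rho|+|S_\rho|-1$ co-resident elements, each of which should be charged against a distinct unit of $\app(G) - n$. A subtlety is that the ascending and descending processes share the set of unsorted elements: an element may enter one heap and later be discarded because the opposite process has already sorted it. We would use the invariant on $|C_A|, |C_D|$ noted after Theorem~\ref{algo-correct}, together with Lemma~\ref{deg=>verif} bounding per-element appearances, to ensure that such cross-process events do not inflate the charge and that the amortized cost per unit of excess appearance remains $O(1)$, completing the proof.
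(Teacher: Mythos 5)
Your first half is exactly the paper's argument: chain Lemma~\ref{min-appearances-verifications} and Proposition~\ref{monotonicity-appearances} to obtain $n \le \app(G) \le n + 2k$. The divergence is in how you convert excess appearances into a verification bound, and that is where your sketch has a real gap. The paper avoids per-round charging entirely: it writes the verification count as $O\bigl(\sum_{i} \log c_i\bigr)$ over the rounds, then applies Lemma~\ref{uniform-upper-bound-verifications} (a Jensen/Lagrangian argument showing $\sum_i \log c_i$ is maximized when the $c_i$ are equal) to get $\sum_i \log c_i \le 2n\log\bigl(\tfrac{2n+2k}{2n}\bigr)$, and finally uses $\log(1+x)\le x$ to land on $O(k)$. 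This \emph{global} concavity bound is the step you are missing.

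Your per-round charge does not close. You correctly note that a round with heap $C_\rho$ and new insertions $S_\rho$ costs $O\bigl((|S_\rho|+1)\log(|C_\rho|+|S_\rho|)\bigr)$ and that the per-round excess appearances are $|C_\rho|+|S_\rho|-1$. But $\log m \le m-1$ only gives $(|S_\rho|+1)\log m \le (|S_\rho|+1)(m-1)$, which is not $O(m-1)$ when $|S_\rho|$ is comparable to $m$; in the worst case this is quadratic in the heap size, and the ``each co-resident charged a distinct unit of $\app(G)-n$'' plan would give each co-resident a charge of $\Theta(\log m)$, not $O(1)$, leaving you with something like $O(k\log(\nu_+ + \nu_-))$ rather than $O(k)$. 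You flagged this amortization as ``the main obstacle'' but the fix you gesture at (the heap-size invariant plus Lemma~\ref{deg=>verif}) does not by itself bring it down to $O(k)$. Two ways to repair it: (i) adopt the paper's route via Lemma~\ref{uniform-upper-bound-verifications}, which absorbs the unevenness of heap sizes into a single concavity bound; or (ii) sidestep the heap entirely and find the minimum by linear scan of the candidate set each round, costing exactly $m_i - 1$ verifications per round, whence the total is $\sum_i (m_i-1) = \app(G) - n \le 2k$ — this is the clean realization of the $\log m \le m - 1$ intuition, but it is a different data-structure choice than the one both you and the paper analyze.
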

\begin{proof}
    By Proposition \ref{monotonicity-appearances}, the $k$ ambiguous simple edges contribute at most $2k$ extra appearances.
    Since Lemma \ref{min-appearances-verifications} has $\app(H) = n$ for any $\vec{\nu}$-ambiguous graph $H$, and $G$ is built from some $H$ by changing $k$ two-cycles to simple edges, it follows from Proposition \ref{monotonicity-appearances} that $n \leq \app(G) \leq n + 2k$.
    Letting $c_{i}$ denote the number of elements in the candidate heap for the round determining element $i$, we therefore conclude that $O(\sum_{i = 1}^{2n} \log(c_{i})) \leq O(2n \log(\frac{2n + 2k}{2n})) = O(n \log(1 + \frac{k}{n})) \leq O(n \log(e^{k / n})) = O(n(k / n)) = O(k)$, where the first inequality follows from Lemma~\ref{uniform-upper-bound-verifications}.
\end{proof}

\subsection{JND Model}\label{sec:jnd}
We now instantiate our results under the JND model, where $\ell_1, \dots, \ell_n \in [n]$. In this case, $\vec{\nu} = (\delta, \delta)$.
An edge $(x_{i}, x_{j})$ where $|\ell_{i} - \ell_{j}| \leq \delta$ are sometimes referred to as \emph{$\delta$-close}, and when the edge in the graph is simple, it is a \emph{$\delta$-close simple} edge.
We study a setting where we have $r \in \mathbb{N}$ independent comparisons per pair $\{x_{i}, x_{j}\}$. Disagreeing comparisons give a two-cycle between $x_{i}$ and $x_{j}$ in the graph.
Otherwise, there is a simple edge.
Our analysis of \textsf{CandidateSort} implies an upper bound on expected verifications in the JND model. The full proof of the following theorem is in Appendix~\ref{app:proofs}.

\begin{restatable}{theorem}{discrete}
\label{expected-verifications-discrete}
The expected number of verifications to sort with \textsf{CandidateSort} is $O(n\delta 2^{-r})$. 
\end{restatable}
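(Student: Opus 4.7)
The plan is to reduce the JND setting to the adversarial model of Theorem~\ref{thm:general-result} by showing the random tournament graph is, with high probability over the $r$ independent crowdsourced comparisons per pair, a corruption of a $(\delta,\delta)$-ambiguous graph with only $O(n\delta 2^{-r})$ ambiguous simple edges in expectation. Then linearity of expectation combined with the $O(k)$ bound from Theorem~\ref{thm:general-result} yields the claim.

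First I would argue that the JND input fits the $\vec{\nu}$-ambiguous framework with $\vec{\nu}=(\delta,\delta)$ before any corruption. Since $\ell_1,\dots,\ell_n\in[n]$ are distinct, a pair $(x_i,x_j)$ is $\delta$-close iff $|\sigma^{-1}(i)-\sigma^{-1}(j)|\le\delta$ (after identifying values with ranks), so each element of order $k$ has exactly $\min\{k-1,\delta\}$ smaller $\delta$-close neighbours and $\min\{n-k,\delta\}$ larger ones. For any non-$\delta$-close pair, all $r$ comparisons are deterministically correct and produce a correct simple edge, matching the simple-edge requirement of Definition~\ref{def:ambiguous}. The ``base'' $(\delta,\delta)$-ambiguous graph is the one in which every $\delta$-close pair yields a two-cycle; the realised JND graph can therefore be obtained from this base by switching some two-cycles to simple edges, i.e., by introducing ambiguous simple edges in the sense of the adversarial model.

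Second I would compute the probability that a given $\delta$-close pair becomes an ambiguous simple edge. For such a pair, each of the $r$ crowdsourced comparisons is an independent uniform bit, so the pair yields a simple edge (in either direction) precisely when all $r$ bits agree, which happens with probability $2\cdot 2^{-r}=2^{1-r}$. The total number of $\delta$-close pairs is $\sum_{k=1}^{n}\min\{n-k,\delta\}\le n\delta$, so by linearity of expectation the expected number $K$ of ambiguous simple edges satisfies
\[
\mathbb{E}[K]\ \le\ n\delta\cdot 2^{1-r}\ =\ O\bigl(n\delta\,2^{-r}\bigr).
\]

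Finally, conditional on the realised graph having $K=k$ ambiguous simple edges, Theorem~\ref{thm:general-result} guarantees that \textsf{CandidateSort} uses $O(k)$ verifications. Taking expectations, the expected verification count is $O(\mathbb{E}[K])=O(n\delta 2^{-r})$, completing the proof. The main conceptual step (and the only slightly delicate one) is the reduction in the first paragraph: verifying that the JND random graph is distributionally the same as an adversarial corruption of a $(\delta,\delta)$-ambiguous graph, so that the worst-case bound of Theorem~\ref{thm:general-result} can be invoked pointwise and then averaged; the counting and probability computations are routine.
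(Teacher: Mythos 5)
Your proof is correct, but it takes a genuinely different and simpler route than the paper's. Both arguments start by reducing the JND setting to the adversarial model: the crowdsourced graph is $(\delta,\delta)$-ambiguous on the non-$\delta$-close pairs, and each $\delta$-close pair independently becomes an ambiguous simple edge with probability $2^{1-r}$. You then simply invoke the \emph{pointwise} bound $V(k)=O(k)$ from Theorem~\ref{thm:general-result} (noting that the $O(\cdot)$ here has no additive constant, since by Lemma~\ref{min-appearances-verifications} a clean $\vec\nu$-ambiguous graph needs zero verifications) and apply linearity of expectation: $\mathbb{E}[V]\le C\,\mathbb{E}[K]\le C\,n\delta\,2^{1-r}=O(n\delta 2^{-r})$. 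The paper instead works directly with the intermediate expression $n\log\bigl(1+\app(G)/n\bigr)$ that appears inside the proof of Theorem~\ref{thm:general-result}, and controls its expectation via Lemma~\ref{log-binomial-expectation} (a Taylor-expansion bound on $\mathbb{E}[\log(X+n)]$ for binomial $X$) together with Lemma~\ref{discrete-num-delta-pairs}'s exact count $n\delta - \tfrac{\delta^2+\delta}{2}$ of $\delta$-close pairs, before simplifying the logarithm to reach the same $O(n\delta 2^{-r})$. Your approach is shorter and needs no Taylor machinery; what the paper's route buys is an intermediate bound of the form $O\bigl(n\log(1+2^{1-r}\delta)\big)$, which in principle is tighter than $O(n\delta 2^{-r})$ when $\delta 2^{-r}$ is large, though for the stated theorem both evaluate to the same big-$O$. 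The one thing you should be explicit about (you gesture at it) is that the constant in $V(k)\le Ck$ from Theorem~\ref{thm:general-result} is uniform over all graphs with $k$ ambiguous simple edges, which is what justifies conditioning on $K=k$, bounding pointwise, and then averaging.
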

\begin{proof}[Proof sketch]
The probability that a $\delta$-close edge is simple is $2^{1 - r}$, and is independent of all other edges. Then, by Lemma \ref{discrete-num-delta-pairs}, the number of ambiguous simple edges is distributed as $X \sim \mathrm{Binom}(n\delta - \frac{\delta^{2} + \delta}{2}, p = 2^{1 - r})$. Hence, Lemma \ref{min-appearances-verifications} and Proposition \ref{monotonicity-appearances} imply that the number of total appearances in the ascending (and therefore descending) process is at most $2(X + n)$. By a similar argument as in the proof of Theorem~\ref{thm:general-result}, this implies that the expected verifications is
\[O\left(n\log\left(2^{1-r}\left(n\delta - \frac{\delta^2 + \delta}{2}\right) + n \right) - n\log\left(n\right) \right).\]
The result follows by simplifying this expression.
\end{proof}

\begin{corollary}
The following are instantiations of Theorem \ref{expected-verifications-discrete} for different ranges of $r$.
\begin{enumerate}
\item If $r \geq \log \delta$, then the expected number of verifications is $O(n)$.
\item If $r \geq \log n$, then the expected number of verifications is $O(\delta)$.
\item If $r \geq \log n + \log\delta$, then the expected number of verifications is $O(1)$.
\end{enumerate}
\end{corollary}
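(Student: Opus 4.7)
The plan is to derive each case as a direct instantiation of Theorem~\ref{expected-verifications-discrete}, which asserts that the expected verification count is $O(n \delta 2^{-r})$. Since the bound is monotonically decreasing in $r$, to prove the corollary it suffices to evaluate the bound at the threshold value of $r$ given in each case; any larger $r$ gives an at least as good bound. No additional algorithmic or probabilistic reasoning is required beyond the theorem.

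For case (1), I would substitute $r = \log \delta$ into $2^{-r}$, yielding $2^{-r} \leq 1/\delta$, so that $n\delta \cdot 2^{-r} \leq n\delta/\delta = n$, giving $O(n)$ verifications. For case (2), substituting $r = \log n$ gives $2^{-r} \leq 1/n$, so that $n\delta \cdot 2^{-r} \leq \delta$, giving $O(\delta)$. For case (3), substituting $r = \log n + \log \delta$ gives $2^{-r} \leq 1/(n\delta)$, so that $n\delta \cdot 2^{-r} \leq 1$, giving $O(1)$. In each case I would note that for larger $r$ the bound only improves, so the stated inequality $r \geq \cdot$ suffices.

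There is no meaningful obstacle: the corollary is arithmetic once Theorem~\ref{expected-verifications-discrete} is in hand. The only minor care to take is to confirm that the logarithms are taken to base $2$ (matching the $2^{-r}$ in the theorem) and that the constants hidden in the $O(\cdot)$ absorb the factors that arise from the inequalities. I would present the proof as a three-line verification, one line per case, without invoking any further lemmas.
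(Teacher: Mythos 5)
Your proposal is correct and is exactly the intended argument: the paper gives no explicit proof for this corollary because it is, as you say, a direct arithmetic substitution into the $O(n\delta 2^{-r})$ bound of Theorem~\ref{expected-verifications-discrete}, together with monotonicity of the bound in $r$. Your note about confirming that logarithms are base $2$ is the right sanity check and your three-line verification is all that is needed.
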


\subsection{Lower Bound}\label{sec:lb}
We conclude with a lower bound showing that Theorem~\ref{thm:general-result} is tight. Namely, we construct an input with $k$ ambiguous simple edges where any algorithm requires $\Omega(k)$ verifications to obtain the correct sorting.

\begin{theorem}\label{thm:lb}
    For $\nu_{+} = \nu_{-} < \lfloor \frac{n-2}{2} \rfloor$, $k = \lfloor \frac{n}{\nu_{-} + \nu_{+} + 2} \rfloor$, there are input tournament graphs with $2k$ ambiguous simple edges needing at least $k$ verifications.
\end{theorem}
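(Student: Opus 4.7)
The plan is to exhibit a single tournament graph $G$ with exactly $2k$ ambiguous simple edges that is simultaneously consistent with $2^k$ distinct ground-truth orderings. Since every verification reveals only one bit, any algorithm's decision tree must have depth at least $k$ to output the correct ordering in all $2^k$ cases.

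\textbf{Construction.} Let $\nu := \nu_+ = \nu_-$, and partition the first $k(2\nu+2)$ positions into $k$ consecutive blocks $B_1, \ldots, B_k$ of size $2\nu+2$ (any leftover positions are placed after $B_k$ and play no role). For each block $B_b$ let $s_b, t_b$ denote its smallest and largest positions, and let $p_b := s_b + \nu$ and $q_b := s_b + \nu + 1$ be its two middle positions. Start from the canonical $\vec{\nu}$-ambiguous graph $H$ of the identity ordering, and form $G$ by changing, for each $b$, the two-cycle between $x_{s_b}$ and $x_{p_b}$ to the correctly oriented simple edge $(x_{p_b}, x_{s_b})$, and the two-cycle between $x_{q_b}$ and $x_{t_b}$ to $(x_{t_b}, x_{q_b})$. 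By construction $G$ has exactly $2k$ ambiguous simple edges.

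\textbf{Symmetry step.} The key claim is that for every subset $S \subseteq [k]$, the ordering $\pi_S$ obtained from the identity by swapping $x_{p_b}$ with $x_{q_b}$ for each $b \in S$ also realizes $G$ as a $\vec{\nu}$-ambiguous graph with exactly $2k$ ambiguous simple edges. After the corruption, $x_{p_b}$ and $x_{q_b}$ have identical edge neighborhoods up to renaming: each has a simple edge to $x_{s_b}$, each has a simple edge from $x_{t_b}$, they share a two-cycle with each other, they share two-cycles with the remaining members of $B_b \setminus \{x_{s_b}, x_{t_b}\}$, and their edges to every vertex outside $B_b$ are simple and correctly oriented in either interpretation. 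Swapping $x_{p_b}$ and $x_{q_b}$ therefore only shifts which pairs are at order-distance exactly $\nu$ within $B_b$: under $\pi_S$ the pairs $\{x_{s_b}, x_{q_b}\}$ and $\{x_{p_b}, x_{t_b}\}$ become the $\nu$-distance pairs that should be two-cycles, and these are precisely the pairs now rendered simple by the corruption, so $\pi_S$ is realized with the same $2k$ ambiguous simple edges. Inter-block and leftover edges are untouched because any such pair has order-distance at least $\nu+1$ regardless of per-block swaps.

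\textbf{Decision-tree argument and main obstacle.} The $2^k$ orderings $\{\pi_S : S \subseteq [k]\}$ are all consistent with the fixed $G$, so any deterministic algorithm begins at the same state and can only refine its knowledge through verifications, each a single bit. Its computation is a binary decision tree whose leaves are labeled with output permutations; correctness on all $2^k$ orderings requires $2^k$ distinct leaf labels, forcing depth at least $k$. The main obstacle is the symmetry step: one must track the four edges per block incident to $\{x_{s_b}, x_{t_b}\} \times \{x_{p_b}, x_{q_b}\}$ and check that after both corruptions every swap-consistent interpretation accounts for the same total of $2k$ corrupted pairs. The arithmetic is elementary, relying only on the observation that a single-position swap changes any order-distance by at most one, which is just enough to trade a two-cycle for a simple edge at precisely the two boundary pairs of the affected block.
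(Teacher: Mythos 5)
Your proposal is correct, and the core construction is essentially the one the paper uses: you pack $k$ blocks of size $2\nu+2$, keep the two middle vertices of each block as a two-cycle, and convert the two two-cycles at order-distance exactly $\nu$ from those middle vertices (to $x_{s_b}$ and $x_{t_b}$) into correctly-oriented simple edges, so that swapping the two middle vertices is undetectable from the crowdsourced data alone. This matches the paper's subgraphs $G_i$ on vertices $\{v_{i-\gamma},\ldots,v_{i+1+\gamma}\}$ with the two corrupted ``side-edges'' $\{v_{i-\gamma},v_i\}$ and $\{v_{i+1},v_{i+1+\gamma}\}$ up to notation. Where you diverge is in how the lower bound is extracted: the paper's proof stops at a per-block indistinguishability claim (``each subgraph requires at least one verification''), which is informal and, read strictly, leaves open whether a cleverly chosen verification could simultaneously disambiguate more than one block. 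Your argument closes that gap by exhibiting $2^k$ pairwise-distinct ground truths $\{\pi_S\}_{S\subseteq[k]}$, all consistent with the same fixed graph $G$ and each with exactly $2k$ ambiguous simple edges, then running a decision-tree / fooling-set argument: a correct algorithm must output a different permutation for each $\pi_S$, so it needs $2^k$ reachable leaves and hence worst-case depth (number of verifications) at least $k$. This is a genuine improvement in rigor, and in fact it also shows that \emph{only} the $k$ middle-pair verifications carry any information about $S$, which the paper's phrasing does not establish. Your distance bookkeeping across block boundaries (every inter-block pair has order-distance at least $\nu+1$ and a single-position swap changes distances by at most $1$) is the right verification of the symmetry step and holds as stated.
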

\begin{proof}
First, let $\gamma = \nu_{+}$. take a $\nu$-ambiguous graph $G$ where $\ell_{i} = i$ for all $i \in [n]$, and a two-cycle is between $x_{i}$ and $x_{j}$ if and only if $|i - j| \leq \gamma$, referred to as being ``$\gamma$-close.''
Denote the element of order $i$ by $v_{i}$.
For each $i \in S := \{\gamma + 1, 3\gamma + 2, \ldots, (2k + 1)\gamma + k\}$, set edges as $v_{i} \leftrightarrow v_{i + 1}$, $v_{i - \gamma} \to v_{i}$, $v_{i + 1} \to v_{i + 1 + \gamma}$, where $\leftrightarrow$ indicates a two-cycle, and otherwise the edge has the specified direction.
All other $\gamma$-close edges for pairs $\{x_{j}, x_{k}\}$ for $i - \gamma \leq j < k \leq i + \gamma$ are two-cycles.
\begin{figure}[t]
\centering
		\includegraphics[scale=1.9]{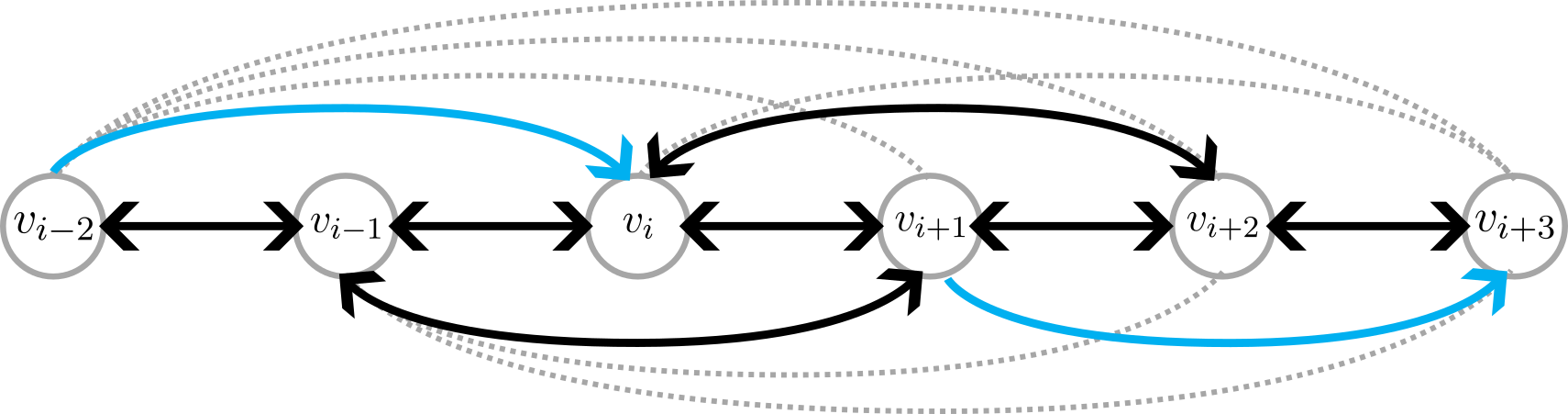}
	\caption{Illustration of the lower bound construction (Theorem~\ref{thm:lb}) with $\gamma = 2$. Grey dotted lines are simple edges, black solid lines are two-cycles, and blue solid lines are ambiguous simple edges. Each construction consists of $k$ ``copies'' of this graph and the necessary non-ambiguous edges.}\label{fig:lb}
\end{figure}
We illustrate this graph in Figure~\ref{fig:lb}.
The edges originally labeled with $\leftrightarrow$ are \emph{middle-edges}, whereas the others are \emph{side-edges}.
We will argue each pair $(v_{i}, v_{i + 1})$ for $i \in S$ is indistinguishable in the input, meaning there is another input graph appearing identical, but $v_{i}$ and $v_{i + 1}$ are swapped in the order.

For each $(2\gamma + 2)$-vertex subgraph $G_{i} = (V_{i} = \{2(i-1)\gamma + i, 2(i-1)\gamma + i + 1, \ldots, 2i\gamma + i\}, E_{i})$ where $i \in [k]$, there is a graph isomorphism given by $v_{i} \mapsto v_{i + 1}$ and $v_{i + 1} \mapsto v_{i},$ with identity elsewhere, since any $v \in V_{i} \setminus \{v_{i}, v_{i + 1}\}$ has edge orientations $\{v, v_{i}\}$ and $\{v, v_{i + 1}\}$ identical. Moreover, $v_{i} \leftrightarrow v_{i + 1}$ is a two-cycle.  
Note that vertices outside $G_{i}$ cannot distinguish $v_{i}$ and $v_{i + 1}$ since they are not $\gamma$-close to both.
Since all inputs are complete graphs, it is the edge-related data determining a graph up to isomorphism.
This makes $G_{i}$ indistinguishable from its isomorphic counterpart without verifying the edge $(v_{i}, v_{i + 1})$. Hence, each subgraph requires at least one verification and the input needs at least $k$ verifications to be distinguishable.
\end{proof}

\section{Conclusion}
We studied a novel noisy sorting model that generalizes the JND model, with motivations from crowdsourcing applications.
This setting seeks to better capture the ambiguity in such scenarios, moving away from the more uniform assumptions of well-studied noise models.
Our algorithm demonstrates how a sparing use of expert feedback can achieve strong results, such as deterministic sorting, even in such a difficult environment.

\bibliographystyle{plainnat}
\bibliography{refs}

\appendix
\section{Appendix}\label{app:proofs}

We make a combinatorial observation about the tournament graph for the JND model.

\begin{lemma}
\label{discrete-num-delta-pairs}
The number of $\delta$-close edges is ${\delta \choose 2} + (n - \delta)\delta = n\delta - \frac{\delta^{2} + \delta}{2}$.
\end{lemma}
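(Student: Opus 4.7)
The plan is to reduce the claim to a direct enumeration over rank positions. Since the JND model here assumes $\ell_1, \dots, \ell_n \in [n]$ are distinct, the multiset $\{\ell_1, \dots, \ell_n\}$ is a permutation of $[n]$. First I would relabel the elements by rank: let $v_1, \dots, v_n$ denote the elements sorted so that $\ell_{v_i} = i$. Under this relabeling, the edge $\{v_i, v_j\}$ is $\delta$-close if and only if $0 < |i - j| \leq \delta$, so counting $\delta$-close edges reduces to counting unordered pairs $\{i, j\} \subseteq [n]$ with $1 \leq |i - j| \leq \delta$.

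Next I would fix $i \in [n]$ and count the number of $j > i$ with $j - i \leq \delta$. This count is $\min(\delta, n - i)$. Splitting the outer sum over $i$ at the threshold $i = n - \delta$ gives the matching decomposition that appears in the lemma statement:
\begin{itemize}
  \item For $i \in \{1, 2, \dots, n - \delta\}$ (a block of $n - \delta$ values), every choice $j \in \{i+1, \dots, i + \delta\}$ is valid, contributing $(n - \delta)\delta$ pairs in total.
  \item For $i \in \{n - \delta + 1, \dots, n\}$, the valid $j$'s are $\{i+1, \dots, n\}$, of which there are $n - i$, contributing $\sum_{i = n - \delta + 1}^{n} (n - i) = \sum_{t = 0}^{\delta - 1} t = \binom{\delta}{2}$ pairs.
\end{itemize}
Adding the two contributions yields $\binom{\delta}{2} + (n - \delta)\delta$, which is the first expression in the lemma.

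Finally, I would verify the algebraic identity $\binom{\delta}{2} + (n - \delta)\delta = n\delta - \tfrac{\delta^2 + \delta}{2}$ by expanding the left-hand side: $\tfrac{\delta(\delta - 1)}{2} + n\delta - \delta^2 = n\delta - \tfrac{\delta^2 + \delta}{2}$. Alternatively (as a quick cross-check I would include), grouping pairs by their common difference $d = j - i \in \{1, \dots, \delta\}$ gives $n - d$ pairs for each $d$, and summing $\sum_{d=1}^{\delta}(n - d) = n\delta - \tfrac{\delta(\delta+1)}{2}$ recovers the right-hand side directly.

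There is no real obstacle here; the statement is a pure combinatorial identity, and the only step that requires any care is making sure the boundary region $i \in \{n - \delta + 1, \dots, n\}$ is handled correctly so that the two summations in the lemma match the two regimes of $\min(\delta, n - i)$. The implicit assumption $\delta \leq n$ is necessary for the decomposition to be meaningful, and it is consistent with the ambient setting since otherwise every pair would be $\delta$-close and both formulas degenerate appropriately.
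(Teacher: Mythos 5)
Your proof is correct and uses essentially the same decomposition as the paper: split the $\delta$-close pairs into a boundary block of $\delta$ positions contributing $\binom{\delta}{2}$ pairs and a bulk of $n-\delta$ positions each contributing exactly $\delta$ pairs. The only cosmetic difference is that you count from the smaller endpoint (so the $\binom{\delta}{2}$ term comes from positions $n-\delta+1,\dots,n$) while the paper counts from the larger endpoint (so it comes from positions $1,\dots,\delta$); the grouping-by-difference cross-check is a nice extra but not needed.
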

\begin{proof}
Order the elements by their ground-truth ordering, in ascending order. Among the first $\delta$ elements, there are evidently ${\delta \choose 2}$ $\delta$-close edges. Then, every consecutive element $x$ afterward contributes $\delta$ additional $\delta$-close edges with the greatest $\delta$ elements with cardinal value less than $x$. Since there are $n - \delta$ such elements, we have that the total number of $\delta$-close edges is ${\delta \choose 2} + (n - \delta)\delta$.
\end{proof}

We use the following lemma to bound the expected value of $\app(G)$ in the JND model. 

\begin{lemma}[Log-Binomial Expectation]
\label{log-binomial-expectation}
If $X \sim \mathrm{Binom}(n, p)$,
then $\mathbb{E}[\log\left(\frac{X + n}{n}\right)] \leq \log(np + n) - \log(n) - \frac{p(1 - p)}{7n} + O\left(\frac{1}{n^{2}}\right)$.
\end{lemma}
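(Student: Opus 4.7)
My plan is to Taylor-expand $f(y) := \log(1+y)$ around $y = p$, apply the expansion at $y = X/n$, and take expectations using the low-order central moments of $X \sim \mathrm{Binom}(n,p)$. Using $f'(p) = 1/(1+p)$, $f''(p) = -1/(1+p)^{2}$, $f'''(p) = 2/(1+p)^{3}$, and $f^{(4)}(\xi) = -6/(1+\xi)^{4}$, a fourth-order Taylor expansion with Lagrange-form remainder gives
\[
\log\!\left(\frac{X+n}{n}\right) = \log(1+p) + \frac{X/n - p}{1+p} - \frac{(X/n - p)^{2}}{2(1+p)^{2}} + \frac{(X/n - p)^{3}}{3(1+p)^{3}} - \frac{(X/n - p)^{4}}{4(1+\xi)^{4}}
\]
for some (random) $\xi$ between $p$ and $X/n$.

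Taking expectations, the linear term vanishes because $\mathbb{E}[X/n] = p$; the quadratic term contributes exactly $-\mathrm{Var}(X/n)/(2(1+p)^{2}) = -p(1-p)/(2n(1+p)^{2})$ via $\mathrm{Var}(X) = np(1-p)$; the explicit third-order piece equals $\mathbb{E}[(X/n - p)^{3}]/(3(1+p)^{3}) = p(1-p)(1-2p)/(3n^{2}(1+p)^{3}) = O(1/n^{2})$ by the third central moment of a Binomial; and the Lagrange remainder is uniformly bounded by $\mathbb{E}[(X/n - p)^{4}]/4 = O(1/n^{2})$ since $|f^{(4)}(\xi)| \leq 6$ for $\xi \geq 0$ and the fourth central moment of $X/n$ is $O(1/n^{2})$. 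Combining these yields $\mathbb{E}[\log((X+n)/n)] = \log(1+p) - p(1-p)/(2n(1+p)^{2}) + O(1/n^{2})$.

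Finally, to recover the stated coefficient, I would bound $1/(2(1+p)^{2}) \geq 1/7$, which holds whenever $(1+p)^{2} \leq 7/2$, i.e., $p \leq \sqrt{7/2} - 1 \approx 0.87$. This range covers the intended use in Theorem~\ref{expected-verifications-discrete}, where $p = 2^{1-r} \leq 1/2$ for $r \geq 2$, and at $r = 1$ (i.e., $p = 1$) the bound is trivial since $p(1-p) = 0$. The hard part is precisely this final constant-tightening step: the Taylor coefficient $1/(2(1+p)^{2})$ degrades to $1/8$ as $p \to 1$, so establishing a universal $1/7$ across all $p \in [0,1]$ would require exploiting the smallness of $p(1-p)$ near the boundary to absorb the residual $\Theta(p(1-p)/n)$ discrepancy into the $O(1/n^{2})$ slack.
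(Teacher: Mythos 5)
Your approach---Taylor-expanding the logarithm around the mean, taking expectations, and using the low-order central moments of the Binomial---is the same as the paper's, just executed more explicitly. The paper's (very terse) derivation writes
\[
\mathbb{E}[\log(X+n)] \leq \log(np+n) + \frac{\mathbb{E}[X-np]}{np+n} - \frac{\mathrm{Var}(X)}{(np+n)^2} + O(1/n^2),
\]
i.e., it drops the factor of $\tfrac{1}{2}$ in the second-order Taylor coefficient, and then uses $(1+p)^2 \leq 4 < 7$ to get the universal constant $\tfrac{1}{7}$. Your version keeps the $\tfrac{1}{2}$, which is the correct second-order coefficient, and you correctly observe that with the $\tfrac{1}{2}$ the bound $\frac{1}{2(1+p)^2} \geq \frac{1}{7}$ only holds for $p \leq \sqrt{7/2}-1 \approx 0.87$; as $p \to 1$ the true coefficient degrades to $\tfrac{1}{8}$ and the discrepancy is $\Theta(p(1-p)/n)$, which cannot in general be absorbed into $O(1/n^2)$. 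This is a genuine issue with the lemma as stated, not with your argument. In practice it is harmless for the intended application (Theorem~\ref{expected-verifications-discrete}), where $p = 2^{1-r}$ gives $p \leq \tfrac12$ for $r \geq 2$ and the $r=1$ case $p=1$ is trivial since $p(1-p)=0$, and the constant is absorbed into the $O(\cdot)$ in the theorem anyway. Your Lagrange-remainder treatment of the third/fourth-order terms is also somewhat cleaner than the paper's unexplained $O(1/n^2)$; one minor correction to your write-up is that the fourth derivative is bounded by $|f^{(4)}(\xi)| \leq 6/(1+\xi)^4 \leq 6$ rather than $\leq 6$ uniformly---but since $\xi \geq 0$ you do have $|f^{(4)}(\xi)| \leq 6$, so the bound stands.
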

\begin{proof}
By expanding and bounding the Taylor expansion, we make the following derivation.
\begin{flalign*}
\mathbb{E}[\log(X + n)] & \leq \log(np + n) + \frac{\mathbb{E}[X - np]}{np + n} - \frac{\mathrm{Var}(X)}{(np + n)^{2}} + O\left(\frac{1}{n^{2}}\right) \\
& \leq \log(np + n) + \frac{\mathbb{E}[X - np]}{np + n} - \frac{np(1-p)}{7n^{2}} + O\left(\frac{1}{n^{2}}\right) \\
& \leq \log(np + n) + \frac{\mathbb{E}[X - np]}{np + n} - \frac{p(1-p)}{7n} + O\left(\frac{1}{n^{2}}\right).
\end{flalign*}
\end{proof}

The proceeding observation is used to provide an upper bound on the number of verifications in Theorem \ref{thm:general-result}.

\begin{lemma}
\label{uniform-upper-bound-verifications}
Suppose $M \in \mathbb{N}_{\geq n}$ and $c_{1}, \ldots, c_{n} \in \mathbb{N}$ has $\sum\limits_{i = 1}^{n} c_{i} = M$. Then, $\sum\limits_{i = 1}^{n} \log(c_{i}) \leq n \log\left(\frac{M}{n}\right)$.
\end{lemma}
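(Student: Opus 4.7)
The plan is to recognize this as a direct consequence of the concavity of the logarithm, i.e., an application of Jensen's inequality (equivalently, AM--GM). The hypothesis $M \geq n$ together with $c_i \in \mathbb{N}$ with $\sum c_i = M$ ensures the average $M/n \geq 1$, so $\log(M/n) \geq 0$ and the right-hand side is well-defined and non-negative.

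First, I would dispose of the degenerate case (if the paper's $\mathbb{N}$ includes $0$): whenever some $c_i = 0$, the left-hand side is $-\infty$ under the convention $\log 0 = -\infty$, and the inequality holds trivially. Otherwise every $c_i \geq 1$, so $\log(c_i)$ is defined and non-negative. Next, since $\log$ is concave on $(0,\infty)$, Jensen's inequality gives
\[
\frac{1}{n}\sum_{i=1}^{n} \log(c_i) \;\leq\; \log\!\left(\frac{1}{n}\sum_{i=1}^{n} c_i\right) \;=\; \log\!\left(\frac{M}{n}\right),
\]
and multiplying through by $n$ yields the claim. Equivalently, one can invoke AM--GM on the positive integers $c_1, \ldots, c_n$: $(\prod_i c_i)^{1/n} \leq \frac{1}{n}\sum_i c_i = M/n$, and taking logarithms of both sides produces the same bound.

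I do not anticipate any real obstacle; the only modest care is to handle the convention on $\mathbb{N}$ and to note that $M \geq n$ ensures the right-hand side makes sense as a non-negative quantity (consistent with its use in Theorem~\ref{thm:general-result}, where each $c_i \geq 1$ automatically since every round has at least one element in its heap).
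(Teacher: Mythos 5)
Your proof is correct, and it takes a genuinely different and more elementary route than the paper. The paper formulates the claim as a constrained maximization $\max \sum_i \log(c_i)$ subject to $\sum_i c_i = M$, $c \geq 0$, invokes Slater's condition and strong Lagrangian duality, and solves the KKT stationarity condition $\nabla_c \mathcal{L} = 0$ to find that the optimizer is the uniform vector $c_i^* = M/n$, then notes that the integer-constrained feasible points are a subset of the relaxed feasible region. You instead apply Jensen's inequality (equivalently AM--GM) directly to the concave function $\log$, obtaining $\frac{1}{n}\sum_i \log(c_i) \leq \log\bigl(\frac{1}{n}\sum_i c_i\bigr) = \log(M/n)$ in one line, with a short aside to dispose of the $c_i = 0$ case. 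Your argument is shorter, avoids any appeal to duality or constraint qualification, and makes the underlying cause (concavity of $\log$) transparent; the paper's Lagrangian approach proves the same bound but with considerably more machinery, and arguably obscures that the whole lemma is just Jensen. The only thing the optimization framing adds is the explicit identification of the maximizer $c_i^* = M/n$, which is not needed for the inequality and which your AM--GM phrasing recovers anyway as the equality case.
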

In other words, Lemma~\ref{uniform-upper-bound-verifications} implies that $\sum\log(c_{i})$ is maximized when the ``mass'' of $M$ is distributed uniformly across $c_{i}$.

\begin{proof}
We relax the problem to ignore integer constraints. Our program is
\begin{equation*}
    \begin{aligned}
        \max_{c} \quad   & \sum\limits_{i = 1}^{n} \log(c_{i}) \\
        \text{s.t. } \quad  
        & \sum_{i = 1}^{n} c_{i} = M, \\
        & c \geq 0. \\
    \end{aligned}
\end{equation*}
We show that the optimal solution to this maximization problem is $c^{*}_{i} = \frac{M}{n}$ for all $i \in [n]$. The program is convex and satisfies Slater's condition, hence strong duality holds for the Lagrangian dual problem, where the Lagrangian is given by $\mathcal{L}(c, \nu) = \sum\limits_{i = 1}^{n} \log(c_{i}) + \nu(M - \sum\limits_{i = 1}^{n} c_{i})$. Then, $\nabla_{c} \mathcal{L}(c, \nu) = 0 \implies \frac{1}{c_{i}} - \nu = 0 \implies c^{*}_{i} = \frac{1}{\nu}$ for all $i \in [n]$, and we have that the dual problem is simply to maximize $n\log\left(\frac{1}{\nu}\right) + \nu(M - \frac{n}{\nu})$. It follows that $\nu^{*} = \frac{n}{M}$ and hence $c^{*}_{i} = \frac{M}{n}$ for all $i \in [n]$.

Since all solutions satisfying the integer constraint are in the feasible set of the relaxed program, it is clear that the optimal solution for the relaxed program is an upper bound for all solutions of the relaxed program.
\end{proof}

\discrete*

\begin{proof}
The probability that a $\delta$-close edge is simple is $2^{1 - r}$, and is independent of all other edges. Then, by Lemma \ref{discrete-num-delta-pairs}, the number of ambiguous simple edges is distributed as $X \sim \mathrm{Binom}(n\delta - \frac{\delta^{2} + \delta}{2}, p = 2^{1 - r})$. Hence, Lemma \ref{min-appearances-verifications} and Proposition \ref{monotonicity-appearances} imply that the number of total appearances in the ascending (and therefore descending) process is at most $2(X + n)$. Lemma \ref{log-binomial-expectation} implies that 
\[\mathbb{E}\left[n\log\left(\frac{2}{n}(X + n)\right)\right] = O\left( n \log \left(  2^{1 - r}\left(n\delta - \frac{\delta^{2} + \delta}{2}\right) + n \right) - n\log \frac{n}{2} \right).\]
Therefore, by Lemma \ref{uniform-upper-bound-verifications}, the number of expected verifications is at most
\begin{flalign*}
& O\left(n\log\left(2^{1-r}\left(n\delta - \frac{\delta^2 + \delta}{2}\right) + n \right) - n\log\left(n\right) \right) \\
= & \quad O\left(n\log(n) + n\log\left(1 + 2^{1-r}\left(\delta - \frac{\delta^{2}}{2n}\right)\right) - n\log(n)\right) \tag*{using $a + b = a(1 + b/a)$} \\
= & \quad O\left( n\log\left(1 + 2^{1-r}\delta\right)\right) \\
= & \quad O\left( n \log\exp(2^{1-r}\delta) \right) \tag*{by Taylor expansion of $\exp$} \\
= & \quad O\left(n\delta 2^{-r}\right).
\end{flalign*}
\end{proof}

\end{document}